\title{Polyhedral Characterization of Reversible Hinged Dissections}
\author{%
  Jin Akiyama%
    \thanks{Tokyo University of Science}
\and
  Erik D. Demaine%
    \thanks{CSAIL, Massachusetts Institute of Technology}
\and
  Stefan Langerman%
    \thanks{Directeur de Recherches du F.R.S-FNRS, Universit\'e Libre de Bruxelles}
}
\date{}
\newif\ifabstract
\newif\iffull
\makeatletter \hypersetup{pdftitle={\@title}}}
 \gdef\xxxmark{%
   \expandafter\ifx\csname @mpargs\endcsname\relax 
     \expandafter\ifx\csname @captype\endcsname\relax 
       \marginpar{xxx}
     \else
       xxx 
     \fi
   \else
     xxx 
   \fi}
 \gdef\xxx{\@ifnextchar[\xxx@lab\xxx@nolab}
 \long\gdef\xxx@lab[#1]#2{\textbf{[\xxxmark #2 ---{\sc #1}]}}
 \long\gdef\xxx@nolab#1{\textbf{[\xxxmark #1]}}
\gdef\fps@figure{!htbp}}
\let\realbfseries=\bfseries
\def\bfseries{\realbfseries\boldmath}
\newtheorem{theorem}{Theorem}[section]
\newtheorem{lemma}[theorem]{Lemma}
\let\epsilon=\varepsilon
\begin{document}
\maketitle

\begin{abstract}
  We prove that two polygons $A$ and $B$ have a reversible hinged dissection
  (a chain hinged dissection that reverses inside and outside boundaries
  when folding between $A$ and~$B$) if and only if $A$ and $B$ are two
  noncrossing nets of a common polyhedron.  Furthermore,
  \emph{monotone} reversible hinged dissections (where all hinges rotate in the same
  direction when changing from $A$ to~$B$) correspond exactly to
  noncrossing nets of a common convex polyhedron.  By envelope/parcel magic,
  it becomes easy to design many
  hinged dissections.
\end{abstract}

\section{Introduction}

Given two polygons $A$ and $B$ of equal area, a \emph{dissection}
\cite{Frederickson-1997}
is a decomposition of $A$ into pieces that can be re-assembled (by
translation and rotation) to form $B$. In a (chain) \emph{hinged} dissection
\cite{Frederickson-2002},
the pieces are hinged together at their corners to form a chain,
which can fold into both $A$ and~$B$,
while maintaining connectivity between pieces at the hinge points.
Figure~\ref{dudeney} shows a famous example by Dudeney \cite{Dudeney-1902-hinged}.
Many known hinged dissections (including Figure~\ref{dudeney}) are
\emph{reversible} (originally called \emph{Dudeney dissection}
\cite{Akiyama-Nakamura-1998}), meaning that the
outside boundary of $A$ goes inside of $B$ after the reconfiguration,
while the portion of the boundaries of the dissection inside of $A$
become the exterior boundary of $B$.
In particular, the hinges must all be on the boundary of both $A$ and~$B$,
in the opposite counterclockwise order.
We thus view reversible hinged dissections as a \emph{cyclic} chain of pieces
and hinges, because the choice of the hinge to cut to perform the
reconfiguration is irrelevant: the two endpoints of the chain meet
in both configurations.
Other papers \cite{Akiyama-Seong-2013,Akiyama-Langerman-Matsunaga-2015}
call the pair $A,B$ of polygons (instead of the hinged dissection)
\emph{reversible}.

\begin{figure}
  \centering
  $$
  \vcenter{\hbox{\includegraphics[width=0.33\linewidth]{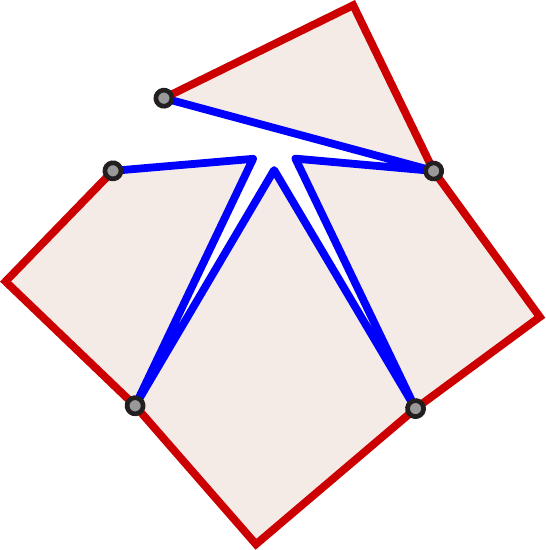}}}
  \quad\leftrightarrow\quad
  \vcenter{\hbox{\includegraphics[width=0.33\linewidth]{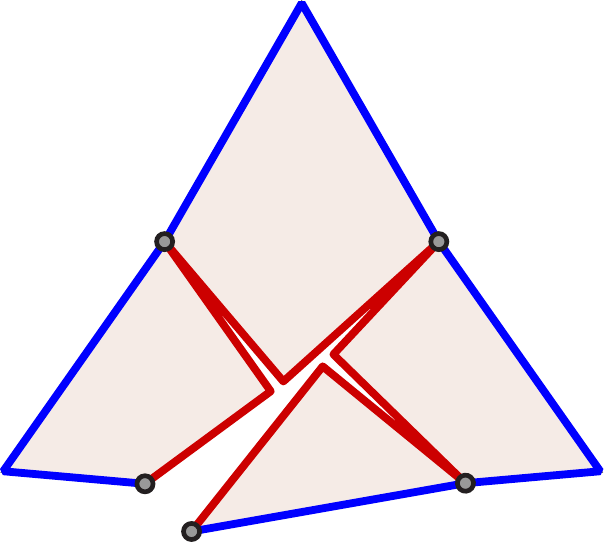}}}
  $$
  \caption{Dudeney's reversible hinged dissection of a square to an equilateral triangle \cite{Dudeney-1902-hinged}.}
  \label{dudeney}
\end{figure}

Without the reversibility restriction,
Abbott et al.~\cite{Abbott-Abel-Charlton-Demaine-Demaine-Kominers-2012}
showed that any two polygons of same area have a hinged dissection.
Properties of reversible pairs of polygons were studied by Akiyama et
al.~\cite{Akiyama-Nakamura-1998,Akiyama-Seong-2013}.
A recent paper \cite{Akiyama-Langerman-Matsunaga-2015} described the
\emph{parcel magic} method to generate reversible hinged dissections.
This method works by cutting open, unfolding, and flattening a polyhedron in
two different ways such that the cut trees of the two unfoldings do not
intersect. The special case when the polyhedron is a dihedron (flat doubly
covered polygon) is called \emph{envelope magic}.
The purpose of this paper is to formalize this method and show that
this characterization is in some sense complete, that is, that every
reversible hinged dissection can be constructed this way.

More precisely, we show the following three results:

\begin{enumerate}
\item Two polygons $A,B$ have a reversible hinged dissection
  if and only if
  $A$ and $B$ are two noncrossing nets of a common polyhedron
  (Theorem~\ref{thm:nonmonotone}).
\item Two polygons $A,B$ have a \emph{monotone} reversible hinged dissection
  (where all the turn angles of all hinges increase from $A$ to $B$)
  if and only if
  $A$ and $B$ are two noncrossing nets of a common \emph{convex} polyhedron
  (Theorem~\ref{thm:monotone}).
\item Two polygons $A,B$ have a \emph{nondegenerate} reversible hinged
  dissection (where each hinge touches just its two incident pieces),
  if and only if
  $A$ and $B$ are two noncrossing nets of a common \emph{convex} polyhedron
  that have only one cut incident to each polyhedron vertex
  (Theorem~\ref{thm:nondegenerate}).
\end{enumerate}

\section{Noncrossing Nets}
The heart of our results is a lemma about circumnavigating a polyhedron
between two noncrossing cut trees of unfoldings.

First we need some terminology.
In this paper, a \emph{polyhedron} is always homeomorphic to a sphere.
An \emph{unfolding} of a polyhedron $P$ cuts the surface of $P$ using a
\emph{cut tree} $T$,%
\footnote{For simplicity, we assume that the edges
  of $T$ are drawn using segments along the surface of $P$, and that
  vertices of degree 2 can be used in $T$ to draw any polygonal path.}
spanning all vertices of $P$, such that the cut surface $P \setminus T$
can be unfolded into the plane without overlap by opening all dihedral
angles between the (possibly cut) faces.
The planar polygon that results from this
unfolding is called a \emph{net} of~$P$.
Two trees $T_1$ and $T_2$ drawn on a surface
are \emph{noncrossing} if pairs of edges of $T_1$ and $T_2$
intersect only at common endpoints and, for any vertex $v$ of both $T_1$ and
$T_2$, the edges of~$T_1$ (respectively,~$T_2$) incident to $v$ are contiguous
in clockwise order around~$v$. Two nets of a common polyhedron
are noncrossing if their cut trees are noncrossing.

\begin{lemma} \label{cycle}
  Let $T_1, T_2$ be noncrossing trees drawn on a polyhedron $P$,
  each of which spans all vertices of~$P$.
  Then there is a cycle $C$ passing through all vertices of $P$
  such that $C$ separates the edges of $T_1$ from edges of $T_2$, i.e.,
  the (closed) interior (yellow region, see Figure~\ref{example}) of $C$ includes all edges of
  $T_1$ and the (closed) exterior of $C$ includes all edges of~$T_2$.
  Furthermore, all such cycles visit the vertices of $P$ in the same order.
\end{lemma}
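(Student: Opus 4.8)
The plan is to work on the sphere $S$ given by the surface of $P$, and to push the whole statement into a single disk obtained by cutting $S$ along $T_1$. First note that the non-crossing hypothesis forces $T_1$ and $T_2$ to be edge-disjoint, so $T_1\cap T_2$ consists of common vertices only; absorbing the degree-$2$ vertices allowed by the footnote into edges, we may assume $V(T_1)=V(T_2)=V(P)=:V$. Since each tree spans $V$, every $v\in V$ has positive degree in both, and the non-crossing condition says exactly that, around $v$, the $T_1$-edges occupy one contiguous block of the cyclic order and the $T_2$-edges occupy another. Cutting $S$ along $T_1$ --- this is precisely the unfolding of $P$ along the cut tree $T_1$ --- yields a closed disk $\Delta$ together with a quotient map $\phi\colon\Delta\to S$ that is a homeomorphism on $\mathrm{int}\,\Delta$ and maps $\partial\Delta$ onto $T_1$ via the boundary walk $W_1$; this walk runs along each edge of $T_1$ twice and passes through each $v$ once for every one of the $\deg_{T_1}(v)$ angular sectors of $S\setminus T_1$ at $v$.

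The structural heart of the argument is that, because the $T_2$-edges at $v$ form a contiguous block disjoint from the $T_1$-block, they all lie in the single sector of $S\setminus T_1$ at $v$ that contains the entire $T_2$-block, and so they all attach to $\partial\Delta$ at one point, which I call $\tilde v$. Hence $\widetilde{T_2}:=\phi^{-1}(T_2)$ is a tree in $\Delta$ isomorphic to $T_2$ whose vertex set is the set of pairwise distinct boundary points $\{\tilde v:v\in V\}$, with all edges running through $\mathrm{int}\,\Delta$ except at their endpoints. The task has become: produce a Jordan curve in $\Delta$ through the $\tilde v$ that separates $\partial\Delta$ from $\widetilde{T_2}$, and show it is essentially unique.

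For existence, let $N$ be a regular neighborhood of $\widetilde{T_2}$ in $\Delta$ chosen so that $N\cap\partial\Delta=\{\tilde v:v\in V\}$, and set $C:=\phi(\partial N)$. As a thickening of a tree, $N$ is a closed disk, so $\partial N$ is a Jordan curve meeting $\partial\Delta$ exactly in the $\tilde v$ and lying in $\mathrm{int}\,\Delta$ otherwise; since $\phi$ is injective off $\partial\Delta$ and the $\tilde v$ have distinct images, $C$ is a simple closed curve through every vertex of $P$. One component of $S\setminus C$ is $\phi(\mathrm{int}\,N)$, whose closure contains $T_2$; the other contains $T_1=\phi(\partial\Delta)$ because $\partial\Delta$ misses $\mathrm{int}\,N$. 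Declaring the latter component the interior makes $C$ the cycle required by the first assertion.

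For the ``furthermore'', let $C$ be any cycle through $V$ that separates $T_1$ (interior) from $T_2$ (exterior); after a small isotopy fixing $V$ we may assume $C$ meets $T_1\cup T_2$ only at $V$. At each $v$, the two angular sectors at $v$ into which $C$ departs must be the two gaps that separate the $T_1$-block from the $T_2$-block (these are the only two leaving all $T_1$-edges on one side of $C$ and all $T_2$-edges on the other), and both of these gaps sit inside the single sector of $S\setminus T_1$ at $v$ that carries the whole $T_2$-block --- the one recorded by $\tilde v$. So $\phi^{-1}(C\setminus V)$ accumulates only at the points $\tilde v$, and $\widetilde C:=\overline{\phi^{-1}(C\setminus V)}$ is a Jordan curve in $\Delta$ meeting $\partial\Delta$ exactly in $\{\tilde v\}$ and otherwise lying in $\mathrm{int}\,\Delta$. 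Now use the elementary planarity fact that a Jordan curve contained in a closed disk and meeting its bounding circle in finitely many points visits those points in the same cyclic order (up to reflection) as the circle does; applied to $\widetilde C$ and $\partial\Delta$, this shows that $C$ visits the vertices of $P$ in the cyclic order in which the marked points $\tilde v$ occur along $W_1$ --- an order that is visibly independent of $C$. I expect the bulk of the real work to be here: checking that the lift of an arbitrary separating cycle touches $\partial\Delta$ only at the $\tilde v$ (which is where the contiguous-block hypothesis gets used a second time), together with a clean proof of the planarity fact; by comparison the reduction and the existence construction are routine, apart from the care needed to handle the footnote's degree-$2$ vertices.
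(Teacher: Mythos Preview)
Your proof is correct, and it follows a genuinely different route from the paper's.

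The paper works directly on the sphere. For existence it takes an Euler tour of $T_1$, replaces each traversal $u\to v$ by an explicit three-segment ``sidewalk'' running alongside $uv$, and then shortcuts duplicate visits of a vertex $v$ by a ``crosswalk'' whenever the wedge between two consecutive sidewalks contains no $T_2$-edge; the non-crossing hypothesis guarantees that exactly one visit of each vertex survives. For uniqueness the paper analyzes the planar graph $T_1\cup T_2$ and shows that every face is bounded by at most one $T_1$-path and at most one $T_2$-path, so each ``mixed'' face has exactly two transition vertices and forces the local passage of $C$; chaining these forced passages around each vertex determines the global order.

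Your approach trades this explicit construction for a single global reduction: cutting along $T_1$ turns the sphere into a disk $\Delta$, and the contiguous-block observation (which is exactly where you use the non-crossing hypothesis) says that $T_2$ lifts to a tree with all vertices on $\partial\Delta$. Existence then falls out of the boundary of a regular neighborhood, and uniqueness from the elementary lemma that a Jordan curve in a closed disk meeting $\partial\Delta$ in finitely many points must visit them in the boundary's cyclic order. The two arguments are in a sense dual---the paper walks around $T_1$ on the sphere, you walk around $T_2$ in the disk---but your packaging is more topological and less coordinate-dependent, while the paper's sidewalk/crosswalk construction is more explicitly algorithmic and avoids the need to prove the Jordan-curve-in-a-disk lemma separately. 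Your self-assessment is accurate: the care needed is in showing an arbitrary separating cycle lifts to touch $\partial\Delta$ only at the $\tilde v$ (a small isotopy pushing $C$ off $T_1$ toward the $T_2$-side handles any shared arcs), and in the planarity fact itself, which follows from a short Euler-characteristic count showing each complementary region between $\widetilde C$ and $\partial\Delta$ is a bigon.
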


\begin{figure}
  \centering
  \includegraphics[width=0.4\linewidth]{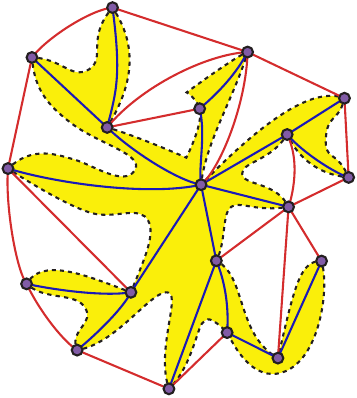}
  \caption{Example of Lemma~\ref{cycle}.
    The edges of $T_1, T_2$ are colored blue, red, respectively.}
  \label{example}
\end{figure}

\iffull
\begin{proof}
Refer to Figure~\ref{steps}.
Let $G$ be the union of $T_1$ and $T_2$.
Because $T_1$ and $T_2$ are noncrossing, $G$ is a planar graph.
Let $\alpha>0$ be a third of the smallest angle between any two incident edges
in~$G$, or $90^\circ$, whichever is smaller. 
Let $\varepsilon>0$ be a third of the smallest distance\footnote{We
  here refer to the geodesic distance on the surface of $P$.}
 between
any edge of $G$ and a vertex not incident to that edge.
View each edge of $G$ as the union of two directed half-edges.
For every half-edge $u,v$ in $G$, its \emph{sidewalk} is a
polygonal path $u,p_{uv},q_{uv},v$ composed of three segments such that 
\begin{enumerate}
\item the counterclockwise angles $\angle p_{uv},u,v$ and $u,v,\angle q_{uv}$
      are both $\alpha$
      (placing $p_{uv}$ and $q_{uv}$ on the left of the directed line $u,v$);
      and
\item both $p_{uv}$ and $q_{uv}$ are at distance $\varepsilon$
      from the segment $u,v$.
\end{enumerate}

\begin{figure}
  \centering
  \subcaptionbox{\label{tree}Tree $T_1$ (purple) and its local interaction with tree $T_2$ (red)}{\includegraphics[width=0.31\linewidth]{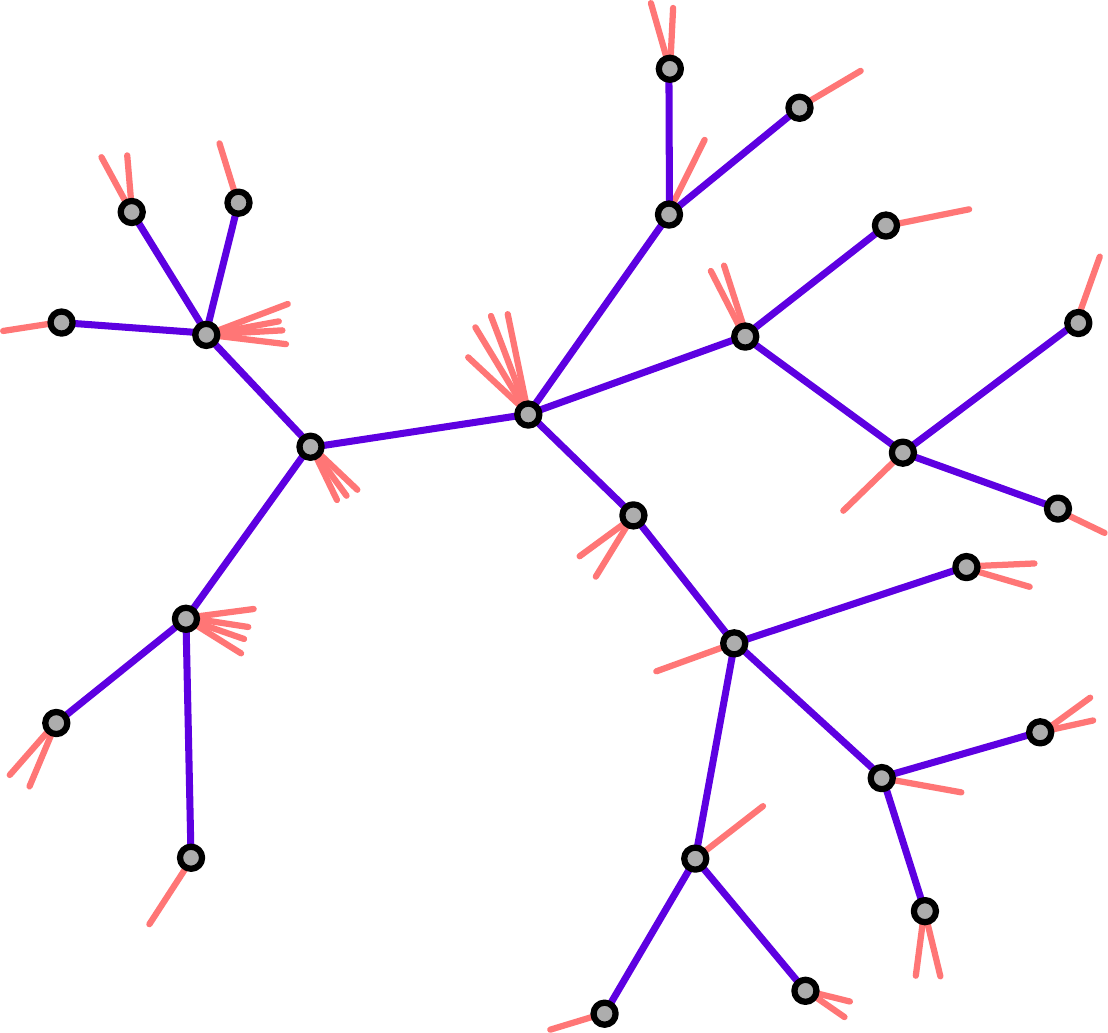}}%
  \hfill
  \subcaptionbox{\label{sidewalks}Sidewalks}{\includegraphics[width=0.31\linewidth]{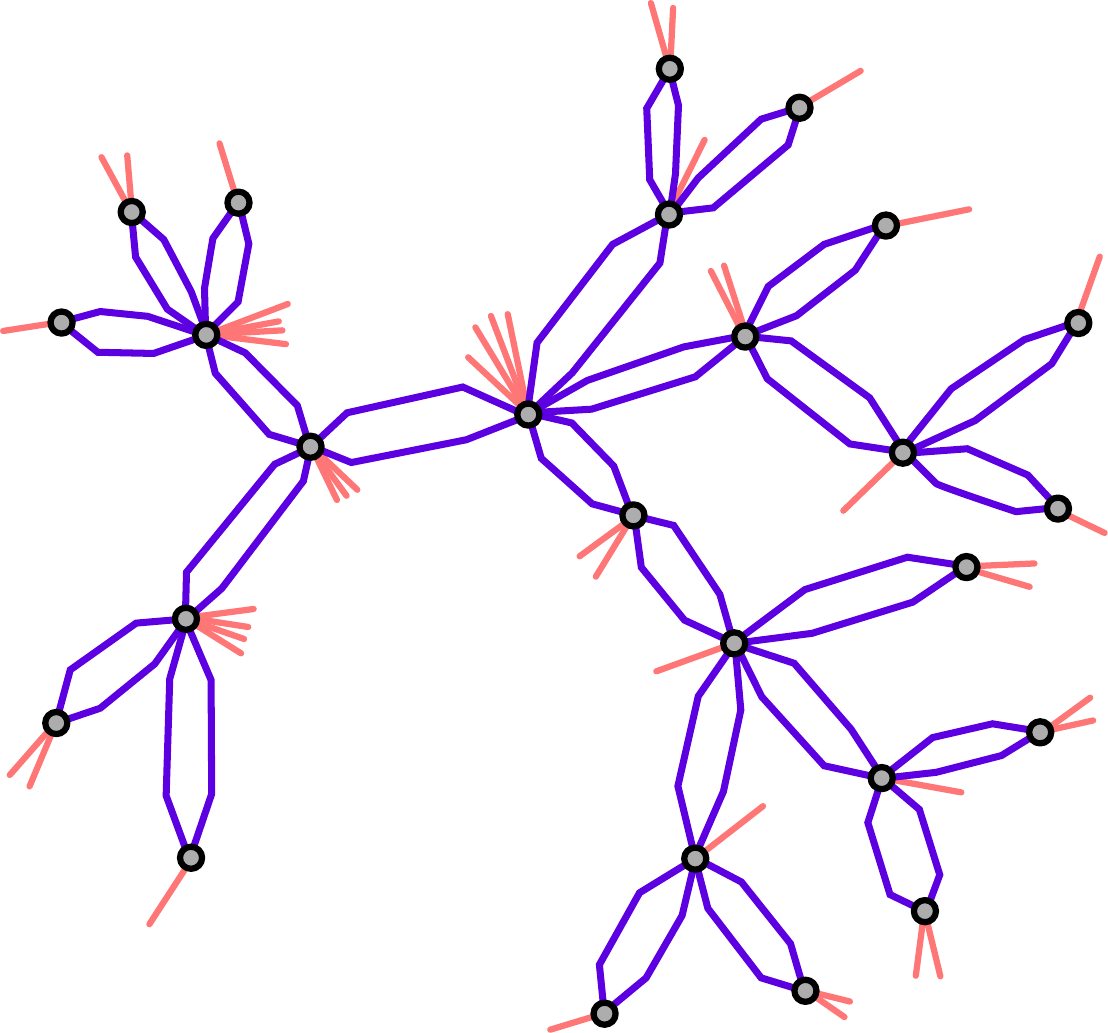}}%
  \hfill
  \subcaptionbox{\label{crosswalks}Crosswalks}{\includegraphics[width=0.31\linewidth]{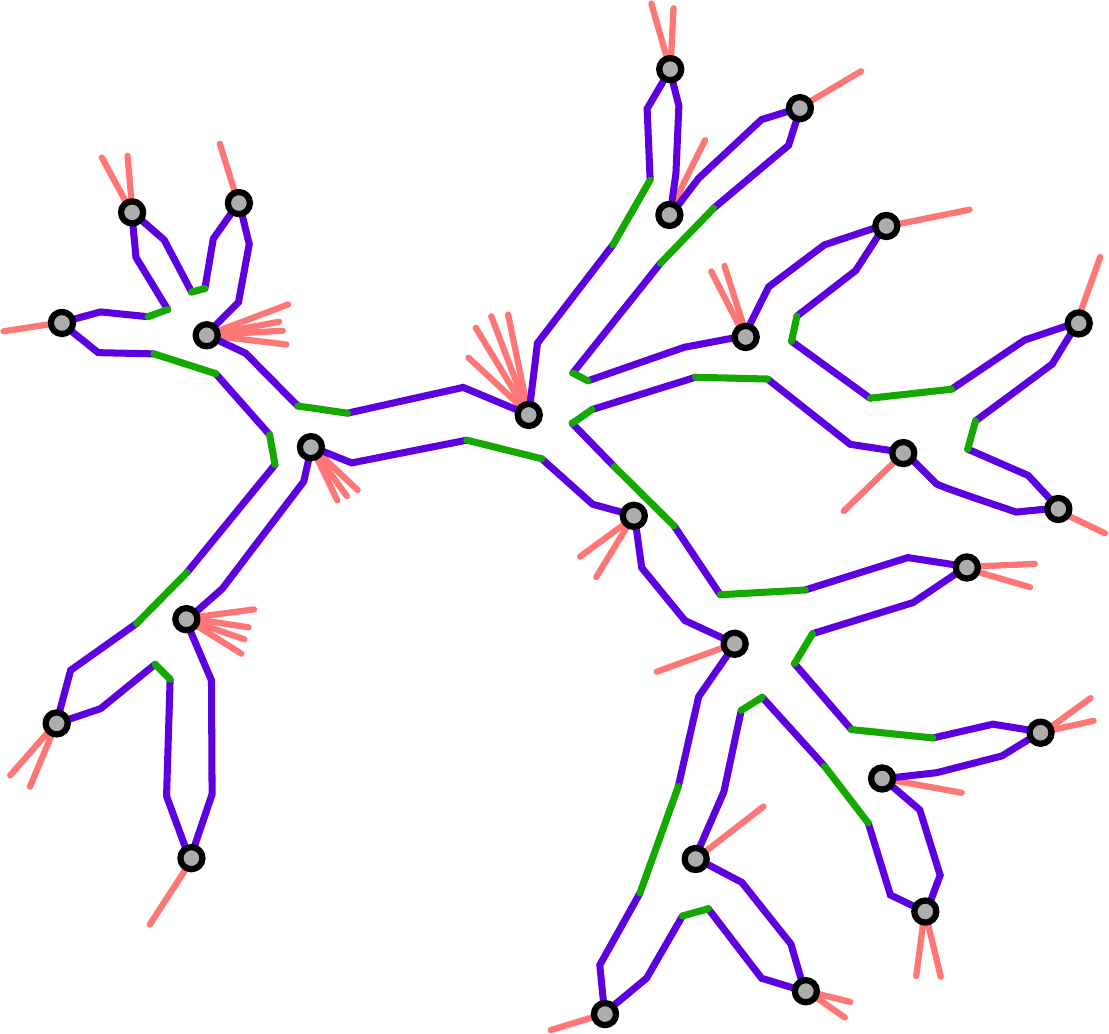}}%
  \caption{Proof of Lemma~\ref{cycle}.}
  \label{steps}
\end{figure}

By construction, $u,v$ is the unique closest edge of $G$ from any point on
its sidewalk. Thus, no two distinct sidewalks intersect and 
sidewalks do not intersect edges of~$G$.

Construct an Euler tour of $T_1$ (Figure~\ref{tree}) that is noncrossing and traverses
clockwise around $T_1$, and replace each step from $u$ to $v$
in the tour by the sidewalk of $u,v$. The concatenation of all
these sidewalks forms a clockwise cycle that visits each vertex $v$
as many times as the degree of~$v$ (Figure~\ref{sidewalks}).
For any two consecutive sidewalks
$u,p_{uv},q_{uv},v,p_{vw},q_{vw},w$ where the wedge $u,v,w$ does not contain
an edge of $T_2$ incident to $v$, shorten the walk by using the
\emph{crosswalk} $q_{uv},p_{vw}$ to obtain 
$\ldots, p_{uv},q_{uv},p_{vw},q_{vw},\ldots$,
thereby avoiding a duplicate visit of~$v$.
Because $T_1$ and $T_2$ are noncrossing, all but one of the visits of
each vertex $v$ will be removed by using crosswalks (Figure~\ref{crosswalks}).

The resulting walk is a simple closed Jordan curve $C$ that visits
each vertex of $G$ exactly once. 
Because $C$ does not intersect $T_1$ and $T_2$,
and locally separates the edges of $T_1$ and $T_2$ at each vertex, and
because $T_1$ and $T_2$ are connected, the curve $C$ separates $P$
into two regions, one containing $T_1$ and the other containing $T_2$.

Finally, we show that the order of vertices of $P$ visited
by any such cycle $C$ is determined by $P$, $T_1$, and~$T_2$.
Consider the planar graph $T_1 \cup T_2$ drawn on~$P$.
We claim that every face of this graph consists of at most one path of
edges from $T_1$ and at most one path of edges from~$T_2$.
Otherwise, we would have at least two components of $T_1$ and at least two
components of $T_2$, neither of which could be connected interior to the face
(because the face is empty), and at most one of which could be connected
exterior to the face (by planarity and the noncrossing property),
contradicting that $T_1$ and $T_2$ are both trees.
Therefore, every face with at least one edge from $T_1$ and at least one edge
from $T_2$ locally forces where $C$ must go, connecting the two
vertices with incident face edges from both $T_1$ and~$T_2$.
Every vertex of $P$ has at least one incident edge from each of the spanning
trees $T_1$ and~$T_2$, so has two incident such faces.
In this way, we obtain the forced vertex ordering of~$C$.
\end{proof}
\fi

\section{Reversible Hinged Dissections}
We can now give our first characterization.

\begin{theorem}\label{thm:nonmonotone}
Two polygons $A$ and $B$ have a reversible hinged dissection if and
only if $A$ and $B$ are two noncrossing nets of a common polyhedron.
\end{theorem}

\begin{proof}
To prove one direction (``only if''), it suffices to glue both sides
of the pieces of the dissection as they are glued in both $A$ and $B$
to obtain a polyhedral metric homeomorphic to a sphere.
A result of Burago and Zalgaller \cite{BZ60,BZ95}\footnote{Burago and
  Zalgaller first proved the result for any orientable surface
  \cite{BZ60} but later noted and fixed a flaw in their construction \cite{BZ95}. At the same time, they generalized their
  result to a stronger statement about possibly non-orientable surfaces.} 
shows that any such
metric corresponds to the surface of some (not necessarily unique,
possibly nonconvex) polyhedron \cite{orourke2010folding}.
In the other direction (``if''), we use Lemma~\ref{cycle} to define the
sequence of hinges. Now the cut tree $T_B$ of net $B$ is completely
contained in the net $A$ and determines the hinged dissection.
\end{proof}

\section{Monotone Reversible Hinged Dissections}

Often times, reversible hinged dissections are also
\emph{monotone} meaning that, if we order the hinges in the dissection
counterclockwise around the boundary of~$A$, then the turn angle at every
hinge decreases (i.e., turns to the right) when transforming from $A$ to~$B$.
(This definition is symmetric in $A$ and $B$ because $B$'s counterclockwise
order of the hinges is the reverse of $A$'s counterclockwise order of the
hinges.)
Recall that we view reversible hinged dissections as a cyclic chain,
so the monotonicity definition also measures the change in angle of the hinge
that is cut open in order to perform the reconfiguration but which recombines
into a vertex in both the $A$ and $B$ configurations.
Figure~\ref{dudeney} is monotone, while Figure~\ref{nonmon} shows a
hinged dissection that is reversible but not monotone.

\begin{figure}
  $$
  \vcenter{\hbox{\includegraphics[width=0.33\linewidth]{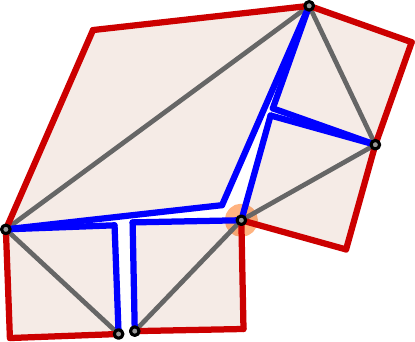}}}
  \quad\leftrightarrow\quad
  \vcenter{\hbox{\includegraphics[width=0.33\linewidth]{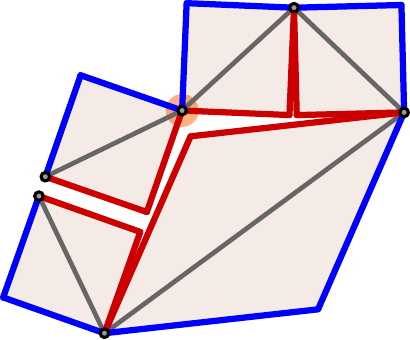}}}
  $$
  \caption{Reversible hinged dissection that is not monotone
           (nor is it nondegenerate), because of the highlighted vertex.}
  \label{nonmon}
\end{figure}

Our second characterization shows that monotonicity in the hinged dissection
is equivalent to convexity of the polyhedron:

\begin{theorem}\label{thm:monotone}
Two polygons $A$ and $B$ have a monotone reversible hinged dissection if and
only if $A$ and $B$ are two noncrossing nets of a common convex polyhedron.
\end{theorem}

\iffull
\begin{proof}
Let $v$ be a hinge of the monotone reversible hinged dissection. Pick
two reference points $v^-$ and $v^+$ in the neighborhood of $v$
and in the pieces before and after hinge $v$, respectively,
in counterclockwise order around the boundary of~$A$.
Let $\alpha_v$ be the counterclockwise angle $\angle v^-,v,v^+$
when the dissection forms polygon $A$, and
let $\alpha'_{v}$ be the same angle when the dissection forms polygon~$B$.
Because the dissection is monotone, $\alpha'_v \geq \alpha_v$ for all $v$.

Just as in Theorem~\ref{thm:nonmonotone}, glue both sides of the
dissection as they are glued in both $A$ and $B$ to obtain a
polyhedral metric homeomorphic to a sphere. Observe now that the total
angle glued at vertex $v$ is exactly 
$\alpha_v+(2\pi-\alpha'_v) \leq 2\pi$.
By Alexandrov's Theorem
\cite[Section~23.3]{Alexandrov-2005,Demaine-O'Rourke-2007},
there exists a unique convex polyhedron (up to rigid transformations)
whose surface has this intrinsic metric.

In the other direction, suppose we have two noncrossing nets of a
convex polyhedron $P$. Use Lemma~\ref{cycle} to find a cycle $C$
separating $T_A$ and $T_B$ on the surface of $P$ and to define the
sequence of hinges, and cut both trees to obtain the dissection.
Pick points $v^-$ and $v^+$ before and after $v$ on $C$ and in the
neighborhood of $v$. Let $\alpha_v$ be the angle $\angle v^-,v,v^+$ in net
$A$ and on the surface of $P$, and $\beta_v$ be the angle
$\angle v^+,v,v^-$ in net $B$ and on the surface of $P$. Because $P$ is
convex, $\alpha_v+\beta_v \leq 2\pi$.
The angle $\alpha'_v = \angle v^-,v,v^+$ when the dissection forms
polygon $B$ is exactly $2\pi-\beta_v \geq \alpha_v$ for every hinge
$v$, and so the dissection is monotone.
\end{proof}
\fi

Two unfoldings of a common convex polyhedron is in some sense the dual notion
of two convex polyhedra with a \emph{common unfolding}, a topic that
has been studied extensively;
see \cite[Section~25.8.3]{Uehara-2016-boxes,Demaine-O'Rourke-2007}.

\section{Nondegenerate Reversible Hinged Dissections}

An interesting special case of a monotone reversible hinged dissection
is when every hinge touches only its two incident pieces in both its
$A$ and $B$ configurations, and thus $A$ and $B$ are the only possible
such configurations.
We call these \emph{nondegenerate} reversible hinged dissections.
(For example, Figure~\ref{dudeney} is nondegenerate, while
Figure~\ref{nonmon} is degenerate and not monotone, and
Figure~\ref{degen} is degenerate and monotone.)

\begin{figure}
  $$
  \vcenter{\hbox{\includegraphics[width=0.33\linewidth]{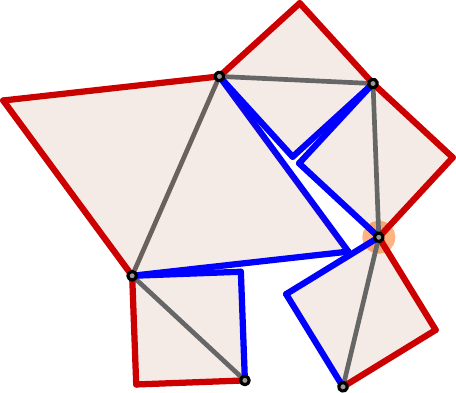}}}
  \quad\leftrightarrow\quad
  \vcenter{\hbox{\includegraphics[width=0.33\linewidth]{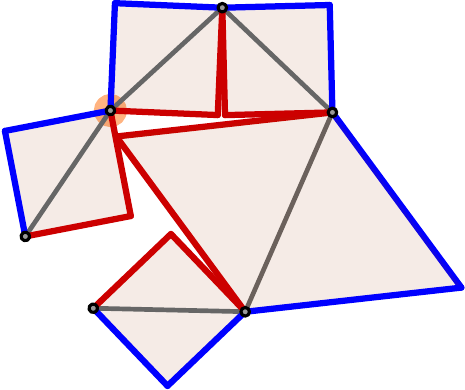}}}
  $$
  \caption{Reversible hinged dissection that is monotone but degenerate,
           because of the highlighted vertex.}
  \label{degen}
\end{figure}

\begin{lemma} \label{lem:nondegenerate implies monotone}
Every nondegenerate reversible hinged dissection is strictly monotone,
i.e., monotone and all turn angles change between $A$ and~$B$.
\end{lemma}
\iffull
\begin{proof}
Pick the reference points $v^-$ and $v^+$  and define angle
$\alpha_v$ and $\alpha'_v$ as in Theorem~\ref{thm:monotone}.
Because the dissection is
nondegenerate, the two pieces attached to hinge $v$ touch on the inside of
$A$. Therefore, for any hinge angle $\angle v^-,v,v^+$ less than $\alpha_v$,
those two pieces would intersect. Because no two piece intersect when
the dissection forms polygon $B$, $\alpha'_v \geq \alpha_v$ for all
$v$ and the dissection is monotone.
Furthermore, we cannot have $\alpha'_v = \alpha_v$, because then the two
pieces would touch on both sides of the hinge, meaning that it is not a
hinge at all.
\end{proof}
\fi

The distinguishing feature of nondegenerate reversible hinged dissections is that
their corresponding unfoldings are \emph{nondegenerate} in the sense that every
vertex of the polyhedron has just a single incident cut (degree~$1$)
in the cut tree.

\begin{theorem} \label{thm:nondegenerate}
Two polygons $A$ and $B$ have a nondegenerate reversible hinged dissection
if and only if $A$ and $B$ are two nondegenerate noncrossing nets of a common convex 
polyhedron.
\end{theorem}

\begin{proof}
By Lemma~\ref{lem:nondegenerate implies monotone}, a nondegenerate reversible hinged
dissection between $A$ and $B$ is strictly monotone, so by
Theorem~\ref{thm:monotone}, $A$ and $B$ are two noncrossing nets of a
common convex polyhedron.
When gluing the pieces along both dissection boundaries to form the
convex polyhedral metric, we form vertices exactly at the hinges
(by strict monotonicity).
By nondegeneracy, no other piece touches a hinge, so the resulting polyhedron
vertex has only one cut in each of the two unfoldings (corresponding to the
opened edge in each hinged dissection).

In the other direction, if $A$ and $B$ are two nondegenerate noncrossing nets of a
common convex polyhedron, then by Theorem~\ref{thm:monotone},
they have a monotone reversible hinged dissection between $A$ and~$B$.
Each of the two states of the hinged dissection is formed by cutting
the cut tree of one unfolding (e.g., $A$) while leaving attached the cut tree
of the other unfolding (e.g., $B$).  Because both unfoldings are nondegenerate,
at each hinge, the pieces share one side (e.g., $B$) while leaving an empty
sector angle on the other side (e.g., $A$), so no other piece can be incident
to the hinge.  Therefore the unfolding is nondegenerate.
\end{proof}


\iffull
\fi

Figure~\ref{examples} shows two examples of hinged dissections resulting
from these techniques.  Historically, many hinged dissections
(e.g., in \cite{Frederickson-1997,Frederickson-2002}) have been designed
by overlaying tessellations of the plane by shapes $A$ and~$B$.
This connection to tiling is formalized by the results of this paper,
combined with the characterization of shapes that tile the plane isohedrally
as unfoldings of certain convex polyhedra \cite{tilemakers}.

\begin{figure*}
  \centering
  \subcaptionbox{Two noncrossing nets of a doubly covered triangle.}
    {\includegraphics[scale=0.9]{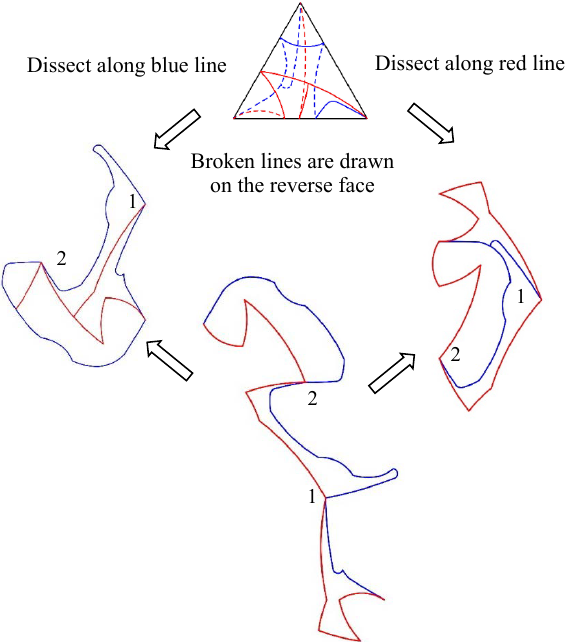}}\hfill
  \subcaptionbox{Lobster to fish: two noncrossing nets of a doubly covered rectangle.}{\includegraphics[scale=0.98]{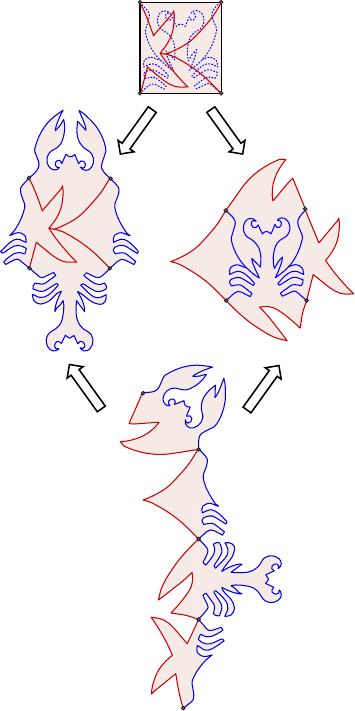}}
  \caption{Two nondegenerate reversible hinged dissections found by parcel
    and envelope magic.}
  \label{examples}
\end{figure*}

\let\realbibitem=\bibitem
\def\bibitem{\par \vspace{-1ex}\realbibitem}

\small
\bibliographystyle{plain}
\bibliography{paper}

\end{document}